\documentclass[runningheads]{llncs}

\usepackage[T1]{fontenc}
\usepackage{amsmath, amssymb}
\usepackage{graphicx}
\usepackage{booktabs}
\usepackage{xcolor}
\usepackage{mathtools}
\usepackage{enumitem}
\usepackage{hyperref}
\usepackage{algorithm}
\usepackage{algpseudocode}
\hypersetup{colorlinks=true,linkcolor=blue,citecolor=blue,urlcolor=blue}

\newcommand{\Vrfy}{\ensuremath{\mathsf{Vrfy}}}             

\newcommand{\MSM}{\ensuremath{\mathsf{MSM}}}               

\title{\textsf{2G2T}: Constant-Size, Statistically Sound\\ MSM Outsourcing}

\author{Majid Khabbazian}
\authorrunning{}
\institute{University of Alberta, Canada}

\begin{document}
\maketitle

\begin{abstract}
    Multi-scalar multiplication (MSM), $\MSM(\vec{P},\vec{x})=\sum_{i=1}^n x_i P_i$, is a dominant computational kernel in discrete-logarithm–based cryptography and often becomes a bottleneck for verifiers and other resource-constrained clients. We present \textsf{2G2T}, a simple protocol for verifiably outsourcing MSM to an \emph{untrusted} server. \textsf{2G2T} is efficient for both parties: the server performs only two MSM computations and returns only two group elements to the client, namely the claimed result $A=\MSM(\vec{P},\vec{x})$ and an auxiliary group element $B$. Client-side verification consists of a single length-$n$ field inner product and only three group operations (two scalar multiplications and one group addition). In our Ristretto255 implementation, verification is up to $\sim 300\times$ faster than computing the MSM locally using a highly optimized MSM routine (for $n$ up to $2^{18}$). Moreover, \textsf{2G2T} enables \emph{latency-hiding verification}: nearly all verifier work can be performed while waiting for the server's response, so once $(A,B)$ arrives the verifier completes the check with only one scalar multiplication and one group addition (both independent of $n$). Finally, despite its simplicity and efficiency, we prove that \textsf{2G2T} achieves statistical soundness: for any (even unbounded) adversarial server, the probability of accepting an incorrect result is at most $1/q$ per query, and at most $e/q$ over $e$ adaptive executions, in a prime-order group of size $q$.
\end{abstract}

\section{Introduction}

Multi-scalar multiplication (MSM) is a ubiquitous computational kernel in prime-order groups and
lies at the heart of modern discrete-logarithm--based cryptography. Let $\mathbb{G}$ be a cyclic group of
prime order $q$, written additively, with scalar field $\mathbb{F}_q$. Given bases
$\vec{P}=(P_1,\dots,P_n)\in\mathbb{G}^n$ and scalars $\vec{x}=(x_1,\dots,x_n)\in\mathbb{F}_q^n$, the MSM is
\[
\MSM(\vec{P},\vec{x}) \;:=\; \sum_{i=1}^n x_i P_i \;\in\; \mathbb{G}.
\]
Even with state-of-the-art algorithms (e.g., Pippenger-style methods), MSM computation requires many group operations
and remains costly for large $n$ (see, e.g.,~\cite{Bernstein2011Ed25519}).

A key reason MSM is so prominent is that many protocols deliberately \emph{aggregate} many checks into a small number
of large group equations, so verification is dominated by one (or a small number of) large MSMs.
In our target applications, these MSMs are not only large, but also \emph{fixed-base} across many queries:
a system publishes (or deterministically derives) a long vector of public generators once, and then many subsequent
proofs/commitments are verified by taking different linear combinations of the \emph{same} bases.
Bulletproofs is a canonical example: verification of a range proof---and especially aggregated verification---boils
down to a large MSM in public generator vectors, and the protocol supports aggregation and batching to reduce
marginal verification cost~\cite{Bunz2018Bulletproofs}.

A second major source of large, fixed-base MSMs is vector commitments and inner-product--based arguments.
In IPA-based polynomial commitments as used in Halo/Halo2-style systems, the commitment key contains a vector
$\vec{G}\in\mathbb{G}^d$ of $d$ group elements, and committing evaluates an inner product
$\langle \vec{a},\vec{G}\rangle$, i.e., an MSM over $d$ bases~\cite{Halo2BookPCIPA,Bowe2019Halo}.
The Halo2 documentation makes explicit that $d=2^k$; consequently, even moderate parameters (e.g., $k=11$ so
$d=2048$) already induce MSMs with thousands of terms, while larger $k$ scales $d$ exponentially
\cite{Halo2BookPCIPA,Halo2BookDevTools}.
In deployments, such long generator vectors are typically instantiated in a reproducible ``nothing-up-my-sleeve''
manner---for example by hashing domain-separated strings to curve points using standardized hash-to-curve techniques
\cite{RFC9380}---and then reused across many verifications.

Beyond zero-knowledge, related fixed-base multi-exponentiation/MSM patterns also arise in distributed key generation
(DKG)~\cite{Pedersen1991ThresholdDKG,Gennaro1999DKG} and verifiable secret sharing (VSS)~\cite{Feldman1987VSS,Pedersen1992VSS}:
parties verify share-consistency relations against a fixed vector of public commitments (typically coefficient
commitments), and these checks reduce to multi-exponentiations/MSMs over those fixed bases.

\smallskip
\noindent
\textbf{Outsourcing motivation.}
Outsourcing becomes attractive when the MSM dimension $n$ is large enough that computing
$\MSM(\vec{P},\vec{x})$ locally dominates cost under latency, energy, or throughput constraints.
This arises in two common scenarios:
(i) \emph{resource-constrained clients/verifiers} (mobile/embedded devices, thin clients, or applications running under tight
energy/latency budgets) that must validate many proofs or commitments but cannot afford repeated large MSMs; and
(ii) \emph{high-throughput verifiers} that validate large volumes of statements, where MSM lies on the critical path and
must be aggressively parallelized to keep up with demand.
In both settings, the fastest MSM implementations rely on careful engineering and often benefit from parallel hardware
(e.g., wide SIMD and, in some deployments, GPUs/accelerators) that may be unavailable or impractical to deploy at the
verifier, motivating delegation to a more powerful \emph{server}.
Finally, in some deployments the natural \emph{server} is already present: the proof generator/prover already holds
the coefficient vector $\vec{x}$ and can compute the MSM on behalf of the verifier.

Delegation, however, immediately introduces an integrity problem: an untrusted server may be faulty or adversarial
and can return an arbitrary group element. Thus, MSM outsourcing falls squarely within the verifiable computation
setting: the client seeks strong correctness guarantees while expending substantially less work than locally computing the MSM.
This tension is formalized in foundational work on verifiable delegation and verifiable computing
\cite{Goldwasser2008GKR,Gennaro2010NIVC}. General-purpose verifiable computation is powerful but often too heavy to
deploy at the granularity of a ubiquitous cryptographic kernel. This motivates specialized protocols tailored to MSM
that provide strong integrity guarantees with minimal client-side verification overhead.

\smallskip
\noindent
\textbf{Our proposed method: \textsf{2G2T}.}
We present \textsf{2G2T}, a statistically sound protocol for verifiable MSM outsourcing with constant-size
server-to-client communication and fast client-side verification.\footnote{%
The mnemonic \textsf{2G2T} reflects (i) ``two group elements to transfer'': the server returns two group elements per query; and
(ii) that the protocol can seem ``too good to be true'' at first glance, as it combines constant-size communication,
$O(1)$ verifier group operations, and $1/q$-level statistical soundness.}
After a one-time setup for fixed bases $\vec{P}$, the server answers a query $\vec{x}$ with only two group elements:
$A$ claimed to equal $\MSM(\vec{P},\vec{x})$ and an auxiliary value $B=\MSM(\vec{T},\vec{x})$ for a precomputed public
merged-bases vector $\vec{T}$.
The client verifies using a single length-$n$ field inner product and $O(1)$ group operations (two scalar multiplications
and one addition), while the server performs exactly two MSMs (only one more than computing $A$ alone).

Our Rust implementation over Ristretto255 confirms that \textsf{2G2T} yields large practical gains in the large-$n$ regime:
verification is up to $\sim 300\times$ faster than computing a highly optimized MSM, and up to $\sim 3{,}000\times$ faster
than a na\"ive ``$n$ scalar multiplications'' baseline, for dimensions up to $n=2^{18}$.
Moreover, verification is \emph{latency-hiding}: essentially all verifier work (the inner product and one fixed-base scalar
multiplication) can be performed while waiting for the server's response, so once $(A,B)$ arrives the client performs only
one scalar multiplication and one group addition to complete the check.
We prove that \textsf{2G2T} is perfectly complete and statistically sound: for any (even computationally unbounded)
adversarial server, the probability that the verifier accepts an incorrect $A$ is at most $1/q$ per execution (roughly
$252$ bits for Ristretto255), and at most $e/q$ over $e$ adaptive executions
(Theorems~\ref{thm:2g2t-completeness}--\ref{thm:2g2t-soundness-multi}).

\smallskip
\noindent
\textbf{Encrypted MSM (EMSM) and composability.}
The closest prior work that directly targets MSM outsourcing against an \emph{untrusted} server while also hiding the
scalar vector $\vec{x}$ is the recent EMSM line of work~\cite{AbbaszadehHKM25}, motivated primarily by \emph{prover-side}
outsourcing where $\vec{x}$ is sensitive and must remain hidden from the server.
In contrast, we focus on the common verifier-side regime where $\vec{x}$ is public (or already known to the server) and the
bottleneck is simply evaluating large MSMs.
This includes, for example, SNARK/Bulletproofs-style \emph{verification}, where the MSM scalars are derived from the public
transcript and public inputs, and deployments where the prover already holds $\vec{x}$ and can act as the server.
In this public-scalar setting, \textsf{2G2T} provides an extremely lightweight integrity layer: the client verifies using a
single length-$n$ inner product plus $O(1)$ curve operations, and verification is naturally \emph{latency-hiding}.\footnote{%
As reported in~\cite{AbbaszadehHKM25}, EMSM's client time for large instances is dominated by a \emph{pre-query} encryption step
that must complete before the request can be sent, which leaves substantially less opportunity for latency-hiding.}

When scalar privacy \emph{is} required, \textsf{2G2T} composes cleanly with EMSM: EMSM can provide coefficient hiding, while
\textsf{2G2T} supplies the integrity check.
This avoids the additional EMSM execution used to harden against an untrusted server, reducing client-side work by up to a
factor of two, without increasing the server's MSM work or the number of group elements returned (relative to EMSM's untrusted
server setting).
At the same time, the integrity guarantee is upgraded to \emph{statistical} soundness, rather than relying on EMSM's
dual-LPN-style assumptions.

\section{Threat model and security goals}\label{sec:threat-model}

We consider the verifiable outsourcing setting for MSM.
A (potentially resource-constrained) client $\mathcal{C}$ holds bases $\vec{P}\in\mathbb{G}^n$ and, for each query,
coefficients $\vec{x}\in\mathbb{F}_q^n$, and wishes to obtain
$\MSM(\vec{P},\vec{x})$ with substantially less online work than computing the MSM locally.
The client interacts with an external server $\mathcal{S}$ that is powerful but \emph{untrusted}.

\paragraph{Adversarial server.}
The server may deviate arbitrarily from the prescribed algorithm: it may be faulty, malicious,
or economically motivated to return an incorrect result to save computation.
We model $\mathcal{S}$ as an adaptive adversary that, after observing the public setup information
and the full transcript of all prior executions (including whether the client accepted or rejected),
can choose future queries $\vec{x}$ and responses as an arbitrary function of its entire view.
We allow $\mathcal{S}$ to be computationally unbounded; thus our integrity guarantees are
information-theoretic/statistical rather than based on computational hardness.

\paragraph{Communication model.}
We assume the network provides authenticity of messages (or equivalently, that the client can bind
responses to the corresponding request and setup state); confidentiality of network traffic is not required
for soundness.

\paragraph{Information revealed.}
We do not aim to hide the scalars $\vec{x}$ from the server (this corresponds to the common verifier-side
use case where $\vec{x}$ is public, or already known to the party doing the computation).
Similarly, we do not require privacy for the MSM output beyond what is implied by the application.
The client keeps its protocol-specific secret state (e.g., the key material produced in the setup phase)
confidential.

\paragraph{Security goals.}
An MSM outsourcing protocol should satisfy the following standard properties.

\begin{itemize}
    \item \textbf{Completeness.} If $\mathcal{S}$ follows the protocol and returns correct values,
    then $\mathcal{C}$ accepts and outputs $A=\MSM(\vec{P},\vec{x})$ for every query $\vec{x}$.

    \item \textbf{Soundness (integrity).} No adversarial server $\mathcal{S}^\ast$ should be able to make
    $\mathcal{C}$ accept an incorrect MSM result except with small probability.
    In our setting, we target \emph{statistical soundness}: for any (even computationally unbounded) $\mathcal{S}^\ast$ and any
    adaptively chosen queries, the probability that the client accepts while outputting
    $A\neq \MSM(\vec{P},\vec{x})$ is negligible in the chosen soundness/security parameter
    (often written as $\le 2^{-\lambda}$ for $\lambda$-bit statistical soundness).
    In \textsf{2G2T}, this probability is at most $1/q$ per query (equivalently, about $\log_2(q)$ bits).

    \item \textbf{Efficiency.} The client's online work per query (including verification) should be
    significantly cheaper than locally computing $\MSM(\vec{P},\vec{x})$.
    We also seek small communication overhead.
\end{itemize}

\section{The \textsf{2G2T} Protocol}
\label{sec:2g2t}

At a high level, \textsf{2G2T} rests on two basic observations that become powerful when combined.
First, MSM is linear in the bases (under coordinate-wise operations): for all
$\vec{P},\vec{P}'\in\mathbb{G}^n$, $\alpha\in\mathbb{F}_q$, and $\vec{x}\in\mathbb{F}_q^n$,
\[
\MSM(\vec{P}+\vec{P}',\vec{x})=\MSM(\vec{P},\vec{x})+\MSM(\vec{P}',\vec{x})
\quad\text{and}\quad
\MSM(\alpha\vec{P},\vec{x})=\alpha\MSM(\vec{P},\vec{x}),
\]
where $\vec{P}+\vec{P}'$ and $\alpha\vec{P}$ denote coordinate-wise group addition and scalar multiplication.
Second, MSM becomes particularly cheap when all bases lie in the span of a single point $Q\in\mathbb{G}$
and the verifier knows the corresponding coefficients: for $\vec{Q}=(\rho_1Q,\dots,\rho_nQ)$ with known
$\vec{\rho}=(\rho_1,\dots,\rho_n)\in\mathbb{F}_q^n$,
\[
\MSM(\vec{Q},\vec{x})
\;=\;
\sum_{i=1}^n x_i(\rho_iQ)
\;=\;
\langle \vec{x},\vec{\rho}\rangle\,Q,
\]
which reduces to a field inner product and a single fixed-base scalar multiplication by $Q$.

\textsf{2G2T} leverages the above facts by having the client precompute a merged-bases vector
\[
\vec{T} \;=\; r\vec{P}+(\rho_iQ)_i \;=\; (rP_1+\rho_1Q,\dots,rP_n+\rho_nQ),
\]
for a secret scalar $r\in\mathbb{F}_q$ and secret vector $\vec{\rho}\in\mathbb{F}_q^n$, and provide $\vec{T}$ to the server
(which may cache it for subsequent queries).
On input a query vector $\vec{x}$, the server returns
\[
A \stackrel{?}{=} \MSM(\vec{P},\vec{x})
\qquad\text{and}\qquad
B \stackrel{?}{=} \MSM(\vec{T},\vec{x}).
\]
The client verifies using the relation $B \stackrel{?}{=} rA + \langle \vec{x},\vec{\rho}\rangle Q$.
Intuitively, acceptance forces the server's response to be consistent with a \emph{single} hidden scalar $r$ and a
verifier-known fixed-base component, yielding statistical soundness while keeping the verifier far below the cost of an MSM.
We give the full protocol description below, and summarize it in pseudocode in Algorithm~\ref{alg:2g2t}.

\paragraph{Precomputation (once).}
Given public bases $\vec{P}=(P_1,\dots,P_n)\in\mathbb{G}^n$ and a fixed public non-identity point
$Q\in\mathbb{G}\setminus\{0\}$, the client samples a secret scalar
$r\xleftarrow{\$}\mathbb{F}_q^{\*}$ and a secret random vector
$\vec{\rho}=(\rho_1,\dots,\rho_n)\xleftarrow{\$}\mathbb{F}_q^n$.
It defines the (implicit) trap bases
\[
\vec{Q} := (\rho_1Q,\dots,\rho_nQ)\in\mathbb{G}^n
\]
and computes the \emph{merged-bases} vector
\[
\vec{T} \;:=\; r\vec{P}+\vec{Q}
\;=\; (rP_1+\rho_1Q,\;\dots,\;rP_n+\rho_nQ)\in\mathbb{G}^n.
\]
The client uploads $\vec{T}$ to the server (and $\vec{P}$ if not already known), and may treat
$\vec{T}$ as a long-lived \emph{public key} while keeping $(r,\vec{\rho})$ as a long-lived
\emph{secret key}.

\paragraph{MSM outsourcing (per query).}
To outsource the computation of $\MSM(\vec{P},\vec{x})$, the client sends $\vec{x}$ to the server.
(If the server has not cached $\vec{P}$ and/or $\vec{T}$, the client sends them as well.)
The server returns two group elements
\[
A \stackrel{?}{=} \MSM(\vec{P},\vec{x})
\qquad\text{and}\qquad
B \stackrel{?}{=} \MSM(\vec{T},\vec{x}).
\]

\paragraph{Verification.}
The client computes the inner product
\[
s \;:=\; \langle \vec{x},\vec{\rho}\rangle \;=\; \sum_{i=1}^{n} x_i\rho_i \in \mathbb{F}_q
\]
and accepts iff
\[
B \;\stackrel{?}{=}\; rA + sQ.
\]

\begin{algorithm}[h]
\caption{\textsf{2G2T}: verifiable outsourcing for $\MSM(\vec{P},\vec{x})$}
\label{alg:2g2t}
\begin{algorithmic}[1]
\Statex \textbf{Input:} public $\mathbb{G}$ of prime order $q$, public $Q\in\mathbb{G}\setminus\{0\}$, bases $\vec{P}=(P_1,\dots,P_n)\in\mathbb{G}^n$
\Statex \textbf{Input (per query):} coefficients $\vec{x}=(x_1,\dots,x_n)\in\mathbb{F}_q^n$
\Statex \textbf{Output (per query):} $A\in\mathbb{G}$ or $\bot$

\Statex
\State \textbf{Precomputation / key setup (once).}
\State Sample $r \xleftarrow{\$}\mathbb{F}_q^{\*}$ and $\vec{\rho}=(\rho_1,\dots,\rho_n)\xleftarrow{\$}\mathbb{F}_q^n$
\For{$i\gets 1$ to $n$}
  \State $T_i \gets rP_i + \rho_i Q$
\EndFor
\State $\vec{T}\gets (T_1,\dots,T_n)$
\State Upload $\vec{T}$ to the server (and $\vec{P}$ if needed) and have the server cache them
\Comment{Client keeps $\mathsf{sk}:=(r,\vec{\rho})$ secret}

\Statex
\State \textbf{Online outsourcing (per query for $\vec{x}$).}
\State \textbf{Client $\to$ Server:} send $\vec{x}$ 
\State \textbf{Server:} compute $A\gets \MSM(\vec{P},\vec{x})$ and $B\gets \MSM(\vec{T},\vec{x})$
\State \textbf{Server $\to$ Client:} return $(A,B)$

\Statex
\State \textbf{Client verification (per query).}
\State Compute $s \gets \langle \vec{x},\vec{\rho}\rangle = \sum_{i=1}^n x_i\rho_i \in \mathbb{F}_q$
\If{$B \neq rA + sQ$}
  \State \Return $\bot$
\EndIf
\State \Return $A$
\end{algorithmic}
\end{algorithm}

\section{ \textsf{2G2T}: Completeness and Soundness}

\paragraph{Common setting.}
Throughout the remainder of this section, we work in the following setup.

\begin{definition}[\textsf{2G2T} setup]\label{def:2g2t-setup}
Let $\mathbb{G}$ be a cyclic group of prime order $q$ written additively, with scalar field $\mathbb{F}_q$.
Let $Q\in\mathbb{G}\setminus\{0\}$ be public and fix $\vec{P}=(P_1,\dots,P_n)\in\mathbb{G}^n$.
The \textsf{2G2T} precomputation samples
$r\xleftarrow{\$}\mathbb{F}_q$ and $\vec{\rho}=(\rho_1,\dots,\rho_n)\xleftarrow{\$}\mathbb{F}_q^n$,
and defines $\vec{T}\in\mathbb{G}^n$ by
\[
T_i := rP_i + \rho_i Q \qquad \text{for all } i\in[n].
\]
Given a query $\vec{x}\in\mathbb{F}_q^n$ and a server response $(A,B)\in\mathbb{G}^2$, the verifier
runs the check in Algorithm~\ref{alg:2g2t} and we denote its decision by
$\Vrfy(r,\vec{\rho};\vec{x},A,B)\in\{0,1\}$.
\end{definition}

\begin{theorem}[Perfect completeness of \textsf{2G2T}]\label{thm:2g2t-completeness}
In the setting of Definition~\ref{def:2g2t-setup}, for every $\vec{x}\in\mathbb{F}_q^n$, if the server returns
\[
A=\MSM(\vec{P},\vec{x})
\qquad\text{and}\qquad
B=\MSM(\vec{T},\vec{x}),
\]
then the verifier in Algorithm~\ref{alg:2g2t} accepts and outputs $A$.
\end{theorem}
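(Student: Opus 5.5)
The plan is a direct algebraic verification. The completeness claim reduces to a single group identity. The verifier accepts exactly when $B = rA + sQ$ with $s=\langle \vec{x},\vec{\rho}\rangle$, and otherwise returns $A$ unchanged. So it suffices to show that the honest values satisfy this equation identically in $r$, $\vec{\rho}$ and $\vec{x}$. No probability enters: the identity will hold for every choice of the setup randomness. This also covers the setup of Definition~\ref{def:2g2t-setup}, where $r$ may be $0$, whereas Algorithm~\ref{alg:2g2t} samples $r$ from $\mathbb{F}_q^{*}$.

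The key step is to expand $B=\MSM(\vec{T},\vec{x})$ using the definition $T_i = rP_i+\rho_iQ$. Linearity of MSM in the bases, the observation stated at the start of Section~\ref{sec:2g2t}, gives
\[
B=\sum_{i=1}^n x_i(rP_i+\rho_iQ) = r\sum_{i=1}^n x_iP_i + \Bigl(\sum_{i=1}^n x_i\rho_i\Bigr)Q = r\,\MSM(\vec{P},\vec{x}) + \langle\vec{x},\vec{\rho}\rangle Q.
\]
At the element level this uses only the $\mathbb{F}_q$-module axioms of $\mathbb{G}$:
\begin{itemize}
\item distributivity of scalar multiplication over group addition;
\item compatibility $x_i(rP_i)=(x_ir)P_i=r(x_iP_i)$, which relies on commutativity of $\mathbb{F}_q$;
\item commutativity and associativity of group addition, used to regroup the sum.
\end{itemize}
Substituting $A=\MSM(\vec{P},\vec{x})$ and $s=\langle\vec{x},\vec{\rho}\rangle$ yields $B=rA+sQ$. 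The check therefore passes and the algorithm returns $A$, which equals $\MSM(\vec{P},\vec{x})$ by hypothesis.

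There is no genuine obstacle here. The only point needing care is that scalars live in $\mathbb{F}_q$ acting on a group of prime order $q$. Scalar multiplication is therefore well defined modulo $q$, and the field arithmetic in $s$, computed in $\mathbb{F}_q$, agrees with the corresponding scalar action on $\mathbb{G}$. I would state this explicitly so that the passage from $\sum_i x_i\rho_i Q$ to $sQ$ is justified.
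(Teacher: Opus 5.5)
Your proof is correct and is essentially the paper's argument: you expand $B=\sum_i x_i(rP_i+\rho_iQ)$ by linearity to get $rA+\langle\vec{x},\vec{\rho}\rangle Q$, so the check passes for every choice of $(r,\vec{\rho})$. One small wording slip: Algorithm~\ref{alg:2g2t} returns $\bot$ when the check fails and returns $A$ only on acceptance, not ``otherwise''; this does not affect the argument.
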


\begin{proof}
Let $s:=\langle \vec{x},\vec{\rho}\rangle=\sum_{i=1}^n x_i\rho_i\in\mathbb{F}_q$.
Using $T_i=rP_i+\rho_iQ$ and expanding the MSM,
\[
B \;=\; \sum_{i=1}^n x_i T_i
\;=\; \sum_{i=1}^n x_i(rP_i+\rho_iQ)
\;=\; r\sum_{i=1}^n x_iP_i \;+\; \left(\sum_{i=1}^n x_i\rho_i\right)Q
\;=\; rA + sQ.
\]
Hence the verifier's check $B \stackrel{?}{=} rA+sQ$ passes and it outputs $A$.\qed
\end{proof}

Next, we present two lemmas that will be used in the soundness proofs (Theorems~\ref{thm:2g2t-soundness}
and~\ref{thm:2g2t-soundness-multi}).

\begin{lemma}[Independence of $r$ and $\vec{T}$]\label{lem:2g2t-indep}
In the setting of Definition~\ref{def:2g2t-setup}, the random variables $r$ and $\vec{T}$ are statistically independent.
Equivalently, for every $t\in\mathbb{G}^n$ and every $a\in\mathbb{F}_q$,
\[
\Pr[r=a \mid \vec{T}=t] \;=\; \frac{1}{q}.
\]
\end{lemma}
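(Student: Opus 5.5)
The plan is to show that, conditioned on any value $a$ of $r$, the vector $\vec{T}$ is uniformly distributed on $\mathbb{G}^n$. Independence and the conditional probability formula then follow from Bayes' rule.

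Note that in Definition~\ref{def:2g2t-setup}, $r$ is uniform on $\mathbb{F}_q$, not $\mathbb{F}_q^{*}$. This is why the target value is exactly $1/q$. Also, $r$ and $\vec{\rho}$ are sampled independently.

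\emph{Step 1: uniformity of $\vec{T}$ given $r$.} Fix $a\in\mathbb{F}_q$ and condition on $r=a$. Since $\vec{\rho}$ is independent of $r$, it remains uniform on $\mathbb{F}_q^n$, and $T_i=aP_i+\rho_iQ$. Because $\mathbb{G}$ has prime order $q$ and $Q\neq 0$, $Q$ generates $\mathbb{G}$. Hence the map $\rho\mapsto \rho Q$ is a bijection $\mathbb{F}_q\to\mathbb{G}$. Composing with translation by the fixed element $aP_i$ gives another bijection, so each $T_i$ is uniform on $\mathbb{G}$. The coordinates $\rho_i$ are independent, so the map
\[
\vec{\rho}\mapsto (aP_1+\rho_1Q,\dots,aP_n+\rho_nQ)
\]
is a bijection $\mathbb{F}_q^n\to\mathbb{G}^n$. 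Consequently, for every $t\in\mathbb{G}^n$ we get $\Pr[\vec{T}=t\mid r=a]=q^{-n}$.

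\emph{Step 2: conclude.} The conditional law from Step 1 does not depend on $a$. Therefore $\Pr[\vec{T}=t]=q^{-n}$, and $\Pr[r=a,\vec{T}=t]=q^{-1}\cdot q^{-n}=\Pr[r=a]\Pr[\vec{T}=t]$, which is independence. Every $t$ has positive probability, so conditioning on $\vec{T}=t$ is well defined. Bayes' rule then gives
\[
\Pr[r=a\mid \vec{T}=t]=\frac{q^{-1}\cdot q^{-n}}{q^{-n}}=\frac1q.
\]

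The only real point of care is Step 1's use of $Q\neq0$ together with prime order, which makes $\rho\mapsto\rho Q$ bijective; everything else is bookkeeping. If one instead used the Algorithm~\ref{alg:2g2t} variant with $r\in\mathbb{F}_q^{*}$, the same argument would give independence but with the value $1/(q-1)$ on $\mathbb{F}_q^{*}$. I will therefore state the proof for the setup of Definition~\ref{def:2g2t-setup} as written.
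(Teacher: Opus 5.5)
Your proposal is correct and follows essentially the same route as the paper: condition on $r=a$, use that $\rho\mapsto\rho Q$ is a bijection $\mathbb{F}_q\to\mathbb{G}$ (prime order and $Q\neq 0$) followed by a translation to get $\vec{T}$ uniform on $\mathbb{G}^n$ regardless of $a$, and conclude independence and $\Pr[r=a\mid\vec{T}=t]=1/q$. Your side remark is also accurate and makes explicit something the paper leaves unstated: the Algorithm~\ref{alg:2g2t} variant with $r\in\mathbb{F}_q^{*}$ would give $1/(q-1)$ instead.
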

\begin{proof}
Fix any $a\in\mathbb{F}_q$ and condition on $r=a$. Then for each $i\in[n]$,
\[
T_i \;=\; aP_i + \rho_i Q,
\qquad \rho_i \xleftarrow{\$}\mathbb{F}_q \text{ independently.}
\]
Since $\mathbb{G}$ has prime order and $Q\neq 0$, the map $\rho\mapsto \rho Q$ is a bijection $\mathbb{F}_q\to\mathbb{G}$,
so $\rho_iQ$ is uniform in $\mathbb{G}$. Adding the fixed point $aP_i$ is a translation, hence $T_i$ is uniform in $\mathbb{G}$.
Independence of the $\rho_i$ implies that $\vec{T}$ is uniform in $\mathbb{G}^n$, i.e., for every $t\in\mathbb{G}^n$,
\[
\Pr[\vec{T}=t \mid r=a] \;=\; \frac{1}{q^n},
\]
which does not depend on $a$. Therefore $\Pr[\vec{T}=t \mid r=a]=\Pr[\vec{T}=t]$ for all $a,t$, thus $r$ and $\vec{T}$ are
independent. Since $r\xleftarrow{\$}\mathbb{F}_q$, we have $\Pr[r=a]=1/q$, and by independence
\[
\Pr[r=a \mid \vec{T}=t] \;=\; \Pr[r=a] \;=\; \frac{1}{q}.
\]
\qed
\end{proof}

\begin{lemma}[Uniqueness of an accepting scalar for an incorrect $A$]\label{lem:2g2t-unique}
Fix $\vec{P}\in\mathbb{G}^n$, $\vec{T}\in\mathbb{G}^n$, a query $\vec{x}\in\mathbb{F}_q^n$, and a pair $(A,B)\in\mathbb{G}^2$.
Let $A^\star:=\MSM(\vec{P},\vec{x})$ and for each $r\in\mathbb{F}_q$ define $\vec{Q}^{(r)}:=\vec{T}-r\vec{P}\in\mathbb{G}^n$.
If there exist two distinct scalars $r_1\neq r_2$ such that
\[
B \;=\; r_1A + \MSM(\vec{Q}^{(r_1)},\vec{x})
\qquad\text{and}\qquad
B \;=\; r_2A + \MSM(\vec{Q}^{(r_2)},\vec{x}),
\]
then $A=A^\star$. In particular, if $A\neq A^\star$, then there is \emph{at most one} scalar $r\in\mathbb{F}_q$
for which the verifier's check can accept $(A,B)$.
\end{lemma}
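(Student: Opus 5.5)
The plan is to use linearity of $\MSM$ in the bases to turn each acceptance equation into an affine equation in $r$, then subtract the two equations.

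First, for any $r\in\mathbb{F}_q$, linearity (as noted at the start of Section~\ref{sec:2g2t}) gives
\[
\MSM(\vec{Q}^{(r)},\vec{x}) \;=\; \MSM(\vec{T}-r\vec{P},\vec{x}) \;=\; \MSM(\vec{T},\vec{x}) - r\,\MSM(\vec{P},\vec{x}) \;=\; B^\star - rA^\star,
\]
where $B^\star:=\MSM(\vec{T},\vec{x})$ depends only on the fixed $\vec{T}$ and $\vec{x}$, not on $r$. Each acceptance condition therefore becomes
\[
B \;=\; r_j A + B^\star - r_j A^\star \;=\; r_j(A-A^\star) + B^\star, \qquad j\in\{1,2\}.
\]

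Next, subtract the two equations to obtain $(r_1-r_2)(A-A^\star)=0$. Since $r_1\neq r_2$, the scalar $r_1-r_2$ is invertible in $\mathbb{F}_q$. Multiplying by $(r_1-r_2)^{-1}$ gives $A-A^\star=0$, that is, $A=A^\star$.

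For the ``in particular'' clause, note that when the verifier's secret scalar is $r$, we have $\vec{\rho}\,Q=\vec{T}-r\vec{P}=\vec{Q}^{(r)}$ coordinatewise. Hence $sQ=\langle\vec{x},\vec{\rho}\rangle Q=\MSM(\vec{Q}^{(r)},\vec{x})$, so the check $B=rA+sQ$ is exactly the displayed condition with that $r$. The contrapositive of the main statement then shows that, when $A\neq A^\star$, at most one $r$ can satisfy it.

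There is no real obstacle here. The only care needed is the identification of the verifier's check with the $\vec{Q}^{(r)}$ form: once $\vec{T}$ is fixed, $\vec{\rho}$ is determined by $r$, since $\rho\mapsto\rho Q$ is a bijection because $Q\neq 0$ and $q$ is prime.
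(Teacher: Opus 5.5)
Your proof is correct and is essentially the paper's argument. Both use linearity of $\MSM$ in the bases and subtract the two acceptance equations to get $(r_1-r_2)(A-A^\star)=0$; you rewrite each equation first in the affine form $B=r_j(A-A^\star)+\MSM(\vec{T},\vec{x})$, whereas the paper subtracts first and then uses $\vec{Q}^{(r_2)}-\vec{Q}^{(r_1)}=(r_1-r_2)\vec{P}$. Your identification of the verifier's check $B=rA+sQ$ with the $\vec{Q}^{(r)}$ form, using that $\vec{\rho}$ is determined by $r$ once $\vec{T}$ is fixed, is exactly what the paper does in the proof of Theorem~\ref{thm:2g2t-soundness}.
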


\begin{proof}
Subtract the two equalities:
\[
0 \;=\; (r_2-r_1)A + \MSM(\vec{Q}^{(r_2)}-\vec{Q}^{(r_1)},\vec{x}).
\]
Since $\vec{Q}^{(r_2)}-\vec{Q}^{(r_1)}=(\vec{T}-r_2\vec{P})-(\vec{T}-r_1\vec{P})=(r_1-r_2)\vec{P}$, linearity of MSM gives
\[
\MSM(\vec{Q}^{(r_2)}-\vec{Q}^{(r_1)},\vec{x}) \;=\; (r_1-r_2)\MSM(\vec{P},\vec{x}) \;=\; (r_1-r_2)A^\star.
\]
Hence
\[
0 \;=\; (r_2-r_1)A + (r_1-r_2)A^\star \;=\; (r_2-r_1)(A-A^\star).
\]
Because $r_2\neq r_1$, we have $r_2-r_1\neq 0$ in $\mathbb{F}_q$, thus $A=A^\star$.\qed
\end{proof}

\begin{theorem}[Statistical soundness of \textsf{2G2T}]\label{thm:2g2t-soundness}
In the setting of Definition~\ref{def:2g2t-setup}, for any (possibly computationally unbounded)
adversarial server $\mathcal{S}^{\*}$ that, on input $(\vec{P},\vec{T},\vec{x})$ for an arbitrary query
$\vec{x}\in\mathbb{F}_q^n$, outputs $(A,B)\in\mathbb{G}^2$, we have
\[
\Pr\Bigl[\,\Vrfy(r,\vec{\rho};\vec{x},A,B)=1 \ \wedge\  A\neq \MSM(\vec{P},\vec{x})\,\Bigr]
\;\le\; \frac{1}{q},
\]
where the probability is over the client's randomness in the setup and any internal randomness used by the (possibly randomized) server.
\end{theorem}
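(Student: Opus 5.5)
We prove the bound by conditioning on the server's view and combining Lemma~\ref{lem:2g2t-indep} with Lemma~\ref{lem:2g2t-unique}. The server $\mathcal{S}^\ast$ sees only $(\vec{P},\vec{T},\vec{x})$ (all public or adversarially chosen) together with its own internal coins $\omega$. Hence its output $(A,B)$ is a deterministic function $f(\vec{T},\omega)$, with $\vec{P}$ and $\vec{x}$ fixed. If the query $\vec{x}$ is itself chosen by the server, we absorb it into $f$ as well, so $\vec{x}=\vec{x}(\vec{T},\omega)$.

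First, we rewrite the verifier's acceptance condition in the form used by Lemma~\ref{lem:2g2t-unique}. Since $\vec{\rho}$ is determined by $\vec{T}$ and $r$ via $\rho_iQ=T_i-rP_i$, we have
\[
sQ=\langle\vec{x},\vec{\rho}\rangle Q=\sum_i x_i(T_i-rP_i)=\MSM(\vec{Q}^{(r)},\vec{x}),
\]
where $\vec{Q}^{(r)}=\vec{T}-r\vec{P}$. So $\Vrfy=1$ holds iff $B=rA+\MSM(\vec{Q}^{(r)},\vec{x})$. This predicate depends only on $(\vec{T},\omega,r)$.

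Next, fix any values $\vec{T}=t$ and $\omega$. These fix $\vec{x}$ and $(A,B)$. If $A=\MSM(\vec{P},\vec{x})$, the bad event does not occur. Otherwise, Lemma~\ref{lem:2g2t-unique} gives at most one scalar $a_{t,\omega}\in\mathbb{F}_q$ for which acceptance holds. The bad event is therefore contained in $\{r=a_{t,\omega}\}$.

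The server's coins $\omega$ are independent of the client's setup randomness $(r,\vec{\rho})$. By Lemma~\ref{lem:2g2t-indep}, conditioned on $\vec{T}=t$ (and $\omega$), $r$ is uniform, so $\Pr[r=a_{t,\omega}\mid \vec{T}=t,\omega]=1/q$. Averaging over $(t,\omega)$ by the law of total probability gives the claimed bound $1/q$.

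The main obstacle is the conditioning step: we must make sure that $a_{t,\omega}$ depends only on information independent of $r$. The crucial observation is that $\vec{\rho}$ is not extra hidden information once $\vec{T}$ and $r$ are fixed, because it is recovered from them. The acceptance predicate is thus a function of $r$ alone given the view. We would also remark that Definition~\ref{def:2g2t-setup} samples $r$ from all of $\mathbb{F}_q$, which is what Lemma~\ref{lem:2g2t-indep} uses. If instead $r\in\mathbb{F}_q^\ast$ as in Algorithm~\ref{alg:2g2t}, the same argument gives $1/(q-1)$.
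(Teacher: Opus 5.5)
Your proposal is correct and takes essentially the same route as the paper. Like the paper, you condition on $\vec{T}=t$ and the server's coins, rewrite the check as $B=rA+\MSM(\vec{Q}^{(r)},\vec{x})$, get at most one accepting scalar from Lemma~\ref{lem:2g2t-unique}, and use Lemma~\ref{lem:2g2t-indep} to bound the chance of hitting it by $1/q$. Your extra remarks are also correct and slightly sharpen the paper's presentation: absorbing a server-chosen $\vec{x}$ into the deterministic function of $(\vec{T},\omega)$, and the $1/(q-1)$ bound when $r$ is drawn from $\mathbb{F}_q^{*}$ as in Algorithm~\ref{alg:2g2t}.
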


\begin{proof}
Fix an arbitrary query $\vec{x}\in\mathbb{F}_q^n$ and let $(A,B)\in\mathbb{G}^2$ be the output of $\mathcal{S}^{\*}$ on
input $(\vec{P},\vec{T},\vec{x})$.  Write $A^\star:=\MSM(\vec{P},\vec{x})$.

Fix an arbitrary $t\in\mathbb{G}^n$ and condition on the event $\vec{T}=t$ (and on $\mathcal{S}^{\*}$'s internal randomness,
so that $(A,B)$ is fixed). By Lemma~\ref{lem:2g2t-indep}, the secret scalar $r$ is uniform over $\mathbb{F}_q$ under this
conditioning.

For each $\hat r\in\mathbb{F}_q$, define the derived vector
\[
\vec{Q}^{(\hat r)} \;:=\; \vec{T}-\hat r\,\vec{P} \;=\; t-\hat r\,\vec{P}\in\mathbb{G}^n.
\]
In particular, for the \emph{actual} secret $r$ used in setup, the relation
$\vec{T}=r\vec{P}+(\rho_iQ)_i$ implies $\vec{Q}^{(r)}=(\rho_iQ)_i$, and hence
\[
\MSM(\vec{Q}^{(r)},\vec{x})
\;=\;
\MSM\bigl((\rho_iQ)_i,\vec{x}\bigr)
\;=\;
\langle \vec{x},\vec{\rho}\rangle Q.
\]
Therefore the verifier's check $B \stackrel{?}{=} rA+\langle \vec{x},\vec{\rho}\rangle Q$ is equivalent to
\[
B \;=\; rA \;+\; \MSM(\vec{Q}^{(r)},\vec{x}).
\]

If $A\neq A^\star$, then by Lemma~\ref{lem:2g2t-unique} the above equality can hold for \emph{at most one} value of
$r\in\mathbb{F}_q$. Since $r$ is uniform in $\mathbb{F}_q$ conditioned on $\vec{T}=t$, it follows that
\[
\Pr\Bigl[\Vrfy(r,\vec{\rho};\vec{x},A,B)=1 \ \wedge\ A\neq A^\star \ \Bigm|\ \vec{T}=t\Bigr] \;\le\; \frac{1}{q}.
\]
Since $t$ was arbitrary, the same bound holds without conditioning, proving the theorem. \qed
\end{proof}

\begin{theorem}[$e$-execution statistical soundness of \textsf{2G2T}]\label{thm:2g2t-soundness-multi}
In the setting of Definition~\ref{def:2g2t-setup}, let $\mathcal{S}^{\*}$ be any (possibly computationally unbounded)
adaptive adversary that is given $(\vec{P},\vec{T})$ and then interacts with the verifier for at most $e$ executions.
In execution $j\in[e]$, $\mathcal{S}^{\*}$ adaptively chooses a query $\vec{x}^{(j)}\in\mathbb{F}_q^n$ and outputs a pair
$(A^{(j)},B^{(j)})\in\mathbb{G}^2$, after which the verifier returns the accept/reject bit.
Then
\[
\Pr\Bigl[\exists\,j\in[e]\!:\ \Vrfy(r,\vec{\rho};\vec{x}^{(j)},A^{(j)},B^{(j)})=1
\ \wedge\ A^{(j)}\neq \MSM(\vec{P},\vec{x}^{(j)})\Bigr]
\;\le\; \frac{e}{q},
\]
where the probability is over the client's randomness in the precomputation and the internal randomness of $\mathcal{S}^{\*}$.
\end{theorem}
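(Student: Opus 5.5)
We prove Theorem~\ref{thm:2g2t-soundness-multi} with a hybrid argument: a union bound over the \emph{first} execution in which the adversary cheats successfully. For each $j\in[e]$ let $E_j$ be the event that execution $j$ accepts with $A^{(j)}\neq\MSM(\vec{P},\vec{x}^{(j)})$, and let $F_j$ be the event that $E_j$ occurs while $E_1,\dots,E_{j-1}$ do not. The target event is $\bigcup_j E_j=\bigcup_j F_j$ (disjoint), so it suffices to show $\Pr[F_j]\le 1/q$ for every $j$.

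The key step is to remove adaptivity by simulating the verifier's feedback. Consider executions before $j$, conditioned on none of $E_1,\dots,E_{j-1}$ occurring. In each such execution $k<j$, the accept bit equals a function of public data alone. If $A^{(k)}=A^\star_k:=\MSM(\vec{P},\vec{x}^{(k)})$, the check passes iff $B^{(k)}=rA^\star_k+\langle\vec{x}^{(k)},\vec{\rho}\rangle Q=\MSM(\vec{T},\vec{x}^{(k)})$, and the right-hand side is computable from $(\vec{P},\vec{T},\vec{x}^{(k)})$. If $A^{(k)}\neq A^\star_k$, the verifier rejects, since $E_k$ did not occur.

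So we define a non-adaptive adversary $\mathcal{S}_j$ that takes input $(\vec{P},\vec{T})$, fixes the internal randomness of $\mathcal{S}^{\*}$, and runs $\mathcal{S}^{\*}$ for $j-1$ executions. For each answer it supplies the bit $[A^{(k)}=A^\star_k \wedge B^{(k)}=\MSM(\vec{T},\vec{x}^{(k)})]$, and it then outputs the $j$-th query and response. On the event $\neg E_1\wedge\dots\wedge\neg E_{j-1}$, the simulated transcript coincides exactly with the real one. Hence $F_j$ implies that $\mathcal{S}_j$'s output $(\vec{x}^{(j)},A^{(j)},B^{(j)})$ is accepted with $A^{(j)}\neq A^\star_j$.

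Next we bound that single event by reusing the argument of Theorem~\ref{thm:2g2t-soundness}. The query $\vec{x}^{(j)}$ is now a function of $(\vec{P},\vec{T})$ and the adversary's coins rather than fixed in advance, so we condition on $\vec{T}=t$ and the coins. Under this conditioning $(\vec{x}^{(j)},A^{(j)},B^{(j)})$ is fixed, and by Lemma~\ref{lem:2g2t-indep} $r$ remains uniform. As in that proof, acceptance is equivalent to $B^{(j)}=rA^{(j)}+\MSM(\vec{Q}^{(r)},\vec{x}^{(j)})$. By Lemma~\ref{lem:2g2t-unique}, when $A^{(j)}\neq A^\star_j$ at most one $r$ satisfies this. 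This gives conditional probability at most $1/q$, and averaging gives $\Pr[F_j]\le 1/q$. Summing over $j$ yields $e/q$.

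The main obstacle is justifying the simulation step: the accept/reject bits are functions of the secret $r$ and could in principle leak information about it. The point to check carefully is that on $\neg E_{<j}$ every bit is determined by public data, so conditioning on the history does not perturb the uniformity of $r$ given $\vec{T}$. Routing everything through $\mathcal{S}_j$ avoids reasoning about posterior distributions of $r$ directly.
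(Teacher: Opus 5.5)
Your proof is correct, and it takes a genuinely different route from the paper's.

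\textbf{The paper's route.} It conditions on $\vec{T}=t$ and the adversary's coins, then tracks a shrinking candidate set $R_j$ of values of $r$. In a correct-$A$ execution the verifier's bit does not depend on $r$. Each rejected incorrect $A^{(j)}$ eliminates at most the single value $\hat r_j$ given by Lemma~\ref{lem:2g2t-unique}. This gives a per-step conditional bound of $1/(q-(j-1))$ and the recursion $p_j\le p_{j-1}+(1-p_{j-1})/(q-j+1)$, which an induction solves to $p_j\le j/q$.

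\textbf{Your route.} You replace the verifier's feedback with a publicly computable simulation that agrees with the real bits up to the first successful cheat. This gives $F_j\subseteq G_j$, where $G_j$ is the event that the non-adaptive adversary $\mathcal{S}_j$ succeeds. You then rerun the single-query argument of Theorem~\ref{thm:2g2t-soundness} without conditioning on the history, obtaining $\Pr[F_j]\le 1/q$, and finish with a union bound.

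\textbf{What each buys.} Your approach is more modular: it turns any single-shot bound $\varepsilon$ into $e\varepsilon$ with no posterior analysis. It also sidesteps the paper's assertion that $r$ is uniform on $R_{j-1}$ given $\neg\mathsf{Bad}_{\le j-1}$. That assertion is true, because given $t$ and the coins the no-bad-event transcript is a single deterministic path, but the paper states it somewhat informally. The paper's bookkeeping, in turn, makes explicit exactly what rejections leak: one excluded candidate per cheating attempt.

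\textbf{A correction to your closing remark.} Conditioning on $\neg E_{<j}$ \emph{does} perturb the distribution of $r$: it excludes up to $j-1$ values, which is precisely what the paper's $R_j$ records. Your proof is nonetheless sound, because it never conditions on that event. It bounds $\Pr[F_j]$ by the unconditional probability $\Pr[G_j]$.
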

\begin{proof}
Fix any adaptive adversary $\mathcal{S}^{\*}$ and consider the interaction for at most $e$ executions.
For $j\in\{1,\dots,e\}$, let $\mathsf{Bad}_{\le j}$ be the event that an incorrect acceptance occurs in one of the first $j$
executions, and let $\mathsf{Bad}:=\mathsf{Bad}_{\le e}$.

Fix an arbitrary $t\in\mathbb{G}^n$ and condition on $\vec{T}=t$ (and on $\mathcal{S}^{\*}$'s internal randomness, so that
$\mathcal{S}^{\*}$ is deterministic given the verifier's replies). By Lemma~\ref{lem:2g2t-indep}, under this conditioning
the secret scalar $r$ is uniform over $\mathbb{F}_q$.

We define a sequence of candidate sets $R_j\subseteq\mathbb{F}_q$ of possible values of $r$ consistent with the transcript
up to execution $j$, \emph{conditioned on no incorrect acceptance so far}. Set $R_0:=\mathbb{F}_q$.

Consider execution $j\in[e]$ and condition on $\neg\mathsf{Bad}_{\le j-1}$.
Let $\vec{x}^{(j)}$ be the query chosen by $\mathcal{S}^{\*}$ and let $(A^{(j)},B^{(j)})$ be its response.
Write $A^{\star (j)}:=\MSM(\vec{P},\vec{x}^{(j)})$.

\smallskip
\noindent\emph{Case 1: $A^{(j)}=A^{\star (j)}$.}
Then the verifier's check
$B^{(j)} \stackrel{?}{=} rA^{(j)}+\langle \vec{x}^{(j)},\vec{\rho}\rangle Q$
is equivalent to
$B^{(j)} \stackrel{?}{=} \MSM(\vec{T},\vec{x}^{(j)})=\MSM(t,\vec{x}^{(j)})$,
which is independent of $r$ given $\vec{T}=t$.
In particular, this execution cannot be an \emph{incorrect} acceptance, and the verifier's bit does not shrink the set of
possible $r$ values. Thus we set $R_j:=R_{j-1}$.

\smallskip
\noindent\emph{Case 2: $A^{(j)}\neq A^{\star (j)}$.}
By Lemma~\ref{lem:2g2t-unique}, for the fixed tuple $(\vec{P},t,\vec{x}^{(j)},A^{(j)},B^{(j)})$ there exists \emph{at most one}
scalar $\hat r_j\in\mathbb{F}_q$ such that the verifier would accept in execution~$j$.
Therefore, conditioned on $\neg\mathsf{Bad}_{\le j-1}$ and $\vec{T}=t$, an incorrect acceptance in execution~$j$ can occur
only if $r=\hat r_j$; since $r$ is uniform over $R_{j-1}$ under this conditioning, we have
\[
\Pr[\text{execution $j$ is an incorrect acceptance}\mid \neg\mathsf{Bad}_{\le j-1},\,\vec{T}=t]
\;\le\;
\frac{1}{|R_{j-1}|}.
\]
Moreover, on the event $\neg\mathsf{Bad}_{\le j}$ the verifier must have rejected in execution $j$, which implies
$r\neq \hat r_j$ and hence removes at most one candidate value from $R_{j-1}$. Thus we set
$R_j := R_{j-1}\setminus\{\hat r_j\}$ (if $\hat r_j\in R_{j-1}$), so $|R_j|\ge |R_{j-1}|-1$.

\smallskip
\noindent\textbf{Bounding the total failure probability.}
By construction, $|R_0|=q$ and in each execution we remove at most one value of $r$ as long as no incorrect acceptance occurs.
Hence on the event $\neg\mathsf{Bad}_{\le j-1}$ we have $|R_{j-1}|\ge q-(j-1)$, and therefore
\[
\Pr[\text{execution $j$ is an incorrect acceptance}\mid \neg\mathsf{Bad}_{\le j-1},\,\vec{T}=t]
\;\le\;
\frac{1}{q-(j-1)}.
\]
Let $p_j:=\Pr[\mathsf{Bad}_{\le j}\mid \vec{T}=t]$. Then
\[
p_j \;\le\; p_{j-1} + (1-p_{j-1})\cdot \frac{1}{q-(j-1)}.
\]
Using $p_0=0$, a straightforward induction gives $p_j\le j/q$ for all $j\le e$, and in particular
$p_e\le e/q$.
Since $t$ was arbitrary, the same bound holds without conditioning on $\vec{T}$, proving the theorem. \qed
\end{proof}

\section{\textsf{2G2T}: Performance}
\label{sec:2g2t-performance}

This section summarizes the concrete performance costs of \textsf{2G2T} (Algorithm~\ref{alg:2g2t})
in terms of (i) communication, (ii) server computation, and (iii) client computation.
We emphasize the \emph{online} costs incurred per outsourced MSM query, and separately note the one-time
precomputation overhead.

\paragraph{Communication.}
\textsf{2G2T} adds only minimal communication beyond what is intrinsically required to outsource an MSM.
An untrusted server cannot compute $\MSM(\vec{P},\vec{x})$ without access to the bases $\vec{P}$ and the
scalars $\vec{x}$; thus, transmitting $\vec{P}$ (if not already known) and $\vec{x}$ (if not already known)
is unavoidable in the generic outsourcing setting.

\begin{itemize}
  \item \textbf{One-time (setup) communication.}
  The client uploads the merged-bases vector $\vec{T}\in\mathbb{G}^n$ to the server once, and the server may cache it.
  If the server does not already have the bases $\vec{P}$, the client also uploads $\vec{P}$ once.
  Thus the one-time upload is $\vec{T}$ (and optionally $\vec{P}$), i.e., $n$ (or $2n$) group elements.

  \item \textbf{Per-query (online) communication.}
  The server returns exactly two group elements $(A,B)\in\mathbb{G}^2$.
  The client-to-server payload is simply the coefficient vector $\vec{x}$ \emph{unless} the server already holds it.
  In particular, in architectures where the server is also the component that produced (or already possesses)
  the coefficients $\vec{x}$ (e.g., a proving service or an internal accelerator invoked by a prover/verifier),
  the online communication from client to server can be reduced to a lightweight request/identifier, while the
  server-to-client response remains the constant-size pair $(A,B)$.

  \item \textbf{Overhead vs.\ non-verifiable outsourcing.}
  Compared to naïvely outsourcing only $A=\MSM(\vec{P},\vec{x})$, \textsf{2G2T} adds (i) a one-time upload of $\vec{T}$ and
  (ii) one additional returned group element $B$ per query.
\end{itemize}

\paragraph{Server computation.}
For each query, the server computes two MSMs:
\[
A=\MSM(\vec{P},\vec{x})
\qquad\text{and}\qquad
B=\MSM(\vec{T},\vec{x}).
\]
Thus, online server work is essentially \emph{one additional} MSM relative to computing $A$ alone.
In practice, implementations based on windowing (e.g., Pippenger-style bucket methods) can reuse the
scalar decomposition of $\vec{x}$ across both MSMs, since the scalars are identical and only the bases differ.
Consequently, the incremental cost of the second MSM is typically less than a full second pass of all scalar
preprocessing.

\paragraph{Client computation.}
Per query, the client performs no large MSM in $\mathbb{G}$.
Instead, verification consists of:
\begin{itemize}
  \item one length-$n$ inner product in the scalar field,
  \[
  s=\langle \vec{x},\vec{\rho}\rangle=\sum_{i=1}^n x_i\rho_i \in \mathbb{F}_q,
  \]
  \item two scalar multiplications in $\mathbb{G}$ (compute $rA$ and $sQ$),
  \item one group addition (compute $rA+sQ$) and one equality check against $B$.
\end{itemize}
In particular, the number of \emph{group} operations performed by the client per query is constant, independent of $n$.
As $n$ grows, the dominant client cost is the field-arithmetic inner product, while the elliptic-curve/group
work remains essentially fixed.

\paragraph{One-time precomputation overhead.}
The precomputation phase computes $\vec{T}=(T_1,\dots,T_n)$ with $T_i=rP_i+\rho_iQ$ for all $i$.
This incurs $n$ scalar multiplications by a fixed scalar $r$ on the bases $\vec{P}$, $n$ scalar multiplications
on the fixed base $Q$, and $n$ group additions.
Because this phase is executed once and amortized over many outsourced queries, its cost is typically dominated
by the online savings when many MSMs are outsourced over the same bases $\vec{P}$.
Moreover, fixed-base techniques can accelerate the $Q$-multiplications, since $Q$ is constant.

\section{Related work}\label{sec:related-work}

\paragraph{Verifiable MSM in transparent, pairing-free groups.}
The closest related line of work that explicitly targets verifiable multi-exponentiation/MSM in transparent groups is due to
Pinkas~\cite{Pinkas2025VerifiableMSM}, which gives a \emph{non-interactive, publicly verifiable} statistical check with
soundness parameter $2^{-\lambda}$.
At a high level, Pinkas' protocol attains public verifiability by returning a proof consisting of $O(\log q)$
group elements and by performing verifier work that depends on the target soundness parameter $\lambda$.
In contrast, \textsf{2G2T} is a designated-verifier protocol with a one-time keyed setup whose \emph{per-query} verification is
dominated by a single length-$n$ field inner product plus $O(1)$ group operations, while achieving constant-size
server-to-client communication (two group elements) and per-query soundness error $1/q$ in prime-order groups.
We view these as complementary points in the design space: public verifiability with larger proofs versus designated-verifier
verification with constant-size responses and ``full-group'' statistical soundness.

A second line of work considers MSM outsourcing for zero-knowledge proof generation.
In particular, Xiong \emph{et al.}~\cite{FCConsistencyOnlyMSM} propose an ``immediate verification'' step based on a
consistency relation between multiple values returned by the server.
However, such a consistency check does not by itself force the server to incorporate \emph{all} $n$ terms of the
MSM instance: a malicious server can satisfy the stated relation while computing only a strict subset of indices
(e.g., returning a partial sum such as $A=x_1P_1$), even though $A\neq \MSM(\vec{P},\vec{x})$ in general.
Thus, in the standard adversarial-server model for outsourcing, the described immediate-check mechanism does not, on its own,
imply soundness.

\paragraph{Structured (non-transparent) approaches via SRS and pairings.}
In bilinear (pairing-based) settings with a structured reference string (SRS), many large algebraic relations can be certified
succinctly via a constant number of pairing equations.
A canonical example is the Kate--Zaverucha--Goldberg (KZG) polynomial commitment scheme: the commitment is an MSM against
SRS-derived bases, yet a constant-size evaluation proof can be verified using $O(1)$ pairings, independent of the polynomial
degree~\cite{KZG10}.
This paradigm underlies many non-transparent SNARK constructions with very fast verification (e.g., Groth16 and
polynomial-commitment--based families), where the proof system itself provides an integrity layer that certifies the relevant
algebraic relations without requiring the verifier to redo large MSMs~\cite{Groth16,Sonic19,Marlin20}.
Our work targets a different regime: transparent, pairing-free prime-order groups and \emph{generic-base} MSM instances, where
the bases are not SRS-derived and there is no pairing-based shortcut for succinct certification.

\paragraph{Homomorphic authenticators and authentication-based delegation.}
Our check is closely related in spirit to \emph{homomorphic authenticators} (a.k.a.\ homomorphic MACs/signatures) for linear
functions: a client authenticates a dataset once, an untrusted server computes on authenticated data, and the verifier checks
that the output is the correct function evaluation.
Linearly homomorphic MACs were introduced in the context of network coding, where intermediate nodes forward linear
combinations of packets while preserving integrity~\cite{AgrawalBoneh2009HomMAC}, and subsequent work generalized homomorphic
authentication to broader classes of computations and more practical instantiations~\cite{GennaroWichs2013FHMA,CatalanoFiore2013PracticalHomMAC}.
\textsf{2G2T} can be viewed as an especially simple linearly-homomorphic authenticator specialized to MSM: the public merged-bases
vector $\vec{T}$ plays the role of an evaluation key, while the verifier's secret state binds the server to the \emph{entire}
MSM instance.

\paragraph{General verifiable computation.}
Finally, MSM outsourcing can be viewed as a special case of verifiable computation and delegation, where the goal is to verify
correctness with substantially less work than direct evaluation.
General-purpose frameworks such as GKR-style interactive proofs and non-interactive verifiable computing formalize this goal,
but are typically heavier than what is desirable for a ubiquitous cryptographic kernel~\cite{Goldwasser2008GKR,Gennaro2010NIVC}.

\section{Performance evaluation}\label{sec:2g2t-performance}

This section empirically evaluates the computational cost of \textsf{2G2T} and quantifies the
speedup obtained by verifying an outsourced MSM (via the \textsf{2G2T} check) rather than
computing the MSM locally.
All benchmarks were run on an Apple~M4 machine with 32\,GB of memory running macOS~Tahoe,
compiled with \texttt{rustc 1.93.0 (254b59607 2026-01-19)} and using \texttt{curve25519-dalek v4.1.3}.
Figure~\ref{fig:2g2t-speedup} summarizes our main results.

\paragraph{Implementation setting and curve choice.}
We implement a reference benchmark in Rust using the \texttt{curve25519-dalek} library over the
Ristretto255 group. Ristretto255 is a natural instantiation for our target setting:
it provides a prime-order group abstraction,
it is transparent (no pairings, no SRS), and it is commonly used in Bulletproofs-style deployments.
Throughout, we fix the public point $Q$ to the standard Ristretto basepoint so that the verifier's
fixed-base multiplication $sQ$ can leverage the library's precomputed basepoint table.\footnote{The
group-only part of verification is already negligible in our measurements; choosing $Q$ as the
basepoint simply reflects a realistic engineering choice for an always-fixed public $Q$.}

For each dimension $n=2^e$ (with $e\in\{10,\dots,18\}$), we sample $\vec{P}\in\mathbb{G}^n$ and
$\vec{x}\in\mathbb{F}_q^n$ at random, run the \textsf{2G2T} precomputation once to obtain $\vec{T}$,
and then measure:
(i) the time to compute $\MSM(\vec{P},\vec{x})$ under selected baselines, and
(ii) the time for the client-side \textsf{2G2T} verification
(i.e., compute $s=\langle \vec{x},\vec{\rho}\rangle$ and check $B \stackrel{?}{=} rA+sQ$).
We report per-iteration averages over many iterations (after warmup) and focus on \emph{ratios} (speedups),
which are relatively stable across machines.

\paragraph{Verification cost and ``lazy'' inner product.}
In \textsf{2G2T}, the verifier's dominant term is computing the scalar
\[
s = \langle \vec{x},\vec{\rho}\rangle = \sum_{i=1}^n x_i\rho_i \in \mathbb{F}_q,
\]
followed by a constant amount of group work ($rA$, $sQ$, one addition, one comparison).
To reduce modular-reduction overhead in the scalar field, we implement a \emph{lazy-reduction} inner product:
we accumulate products in a wide (512-bit) accumulator and reduce modulo $q$ only once per block of terms.
In the experiments of Figure~\ref{fig:2g2t-speedup}, we use a block size of $128$, which keeps the
wide accumulator safely within 512 bits while substantially reducing the number of reductions.
This optimization directly targets the verifier bottleneck and is faithful to how a
high-performance \textsf{2G2T} verifier would be engineered.

\paragraph{Baselines and why we include a ``na\"ive'' MSM.}
Figure~\ref{fig:2g2t-speedup} plots speedup curves against two MSM baselines:

\begin{itemize}
  \item \textbf{Optimized MSM baseline (library MSM).}
  We use \texttt{curve25519-dalek}'s optimized multi-scalar multiplication routine\\
  (\texttt{RistrettoPoint::vartime\_multiscalar\_mul}).
  This is the appropriate baseline for performance-oriented deployments and captures the best
  available software MSM in our chosen setting.

  \item \textbf{Na\"ive MSM baseline (sum of scalar multiplications).}
  We also measure the straightforward method
  \[
  \sum_{i=1}^n x_i P_i \quad\text{computed as}\quad
  \text{acc}\gets 0;\ \text{for }i:\ \text{acc}\gets \text{acc} + x_iP_i.
  \]
  While asymptotically suboptimal, this baseline is still meaningful in practice and is worth reporting:
  (i) it approximates behavior on memory-constrained devices where bucket-based MSM is unattractive due to
  RAM/cache pressure; (ii) it reflects implementations that prioritize simplicity/auditability or constant-time
  scalar multiplication primitives (where MSM optimizations are not available or are disabled);
  and (iii) it provides a clear reference point that isolates how much of the gain comes from \textsf{2G2T}
  verification versus advanced MSM engineering.
\end{itemize}

We emphasize that the \emph{optimized} MSM baseline is the primary one; the na\"ive MSM curve is included as an
additional, practically motivated reference.

\paragraph{Results and interpretation.}
Figure~\ref{fig:2g2t-speedup} reports the speedup defined as
\[
\mathsf{Speedup}(n)\;:=\;\frac{t_{\mathrm{MSM}}(n)}{t_{\mathrm{verify}}(n)},
\]
where $t_{\mathrm{MSM}}(n)$ is the time to compute $\MSM(\vec{P},\vec{x})$ under the chosen baseline and
$t_{\mathrm{verify}}(n)$ is the client-side \textsf{2G2T} verification time.

The key observation is that \textsf{2G2T} verification is extremely lightweight even for large $n$:
the group-only part of the check is essentially constant (two scalar multiplications with one fixed base,
one addition, one equality test), while the only $n$-dependent work is the inner product.
Consequently, the speedup curve increases with $n$ (as fixed verification overhead is amortized) and then
plateaus once verification is dominated by the linear-time inner product.
At the largest tested dimension ($n=2^{18}=262{,}144$), verification takes on the order of a few milliseconds
in our implementation, whereas recomputing the MSM requires hundreds of milliseconds under the optimized MSM
routine and several seconds under the na\"ive baseline.
This corresponds to speedups on the order of
\emph{hundreds of times} against the optimized MSM baseline and
\emph{thousands of times} against the na\"ive baseline at large $n$.

Overall, the measurements support the main performance claim of \textsf{2G2T}:
the verifier can check an outsourced MSM with (i) constant-size server response (two group elements),
(ii) essentially constant group work, and (iii) only a single streaming pass over $\vec{x}$ for the inner product,
yielding large end-to-end speedups in the large-$n$ regime relevant to aggregated/inner-product-based proof systems.

\begin{figure}
    \centering
    \includegraphics[width=0.9\linewidth]{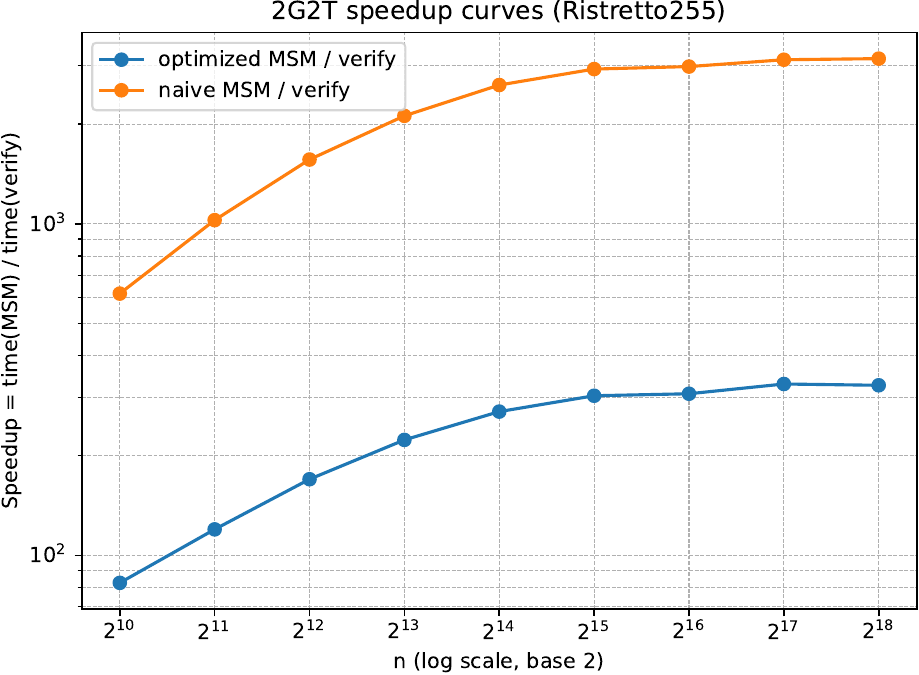}
    \caption{\textsf{2G2T} speedup on Ristretto255 (up to $\sim300\times$ vs.\ optimized MSM and $\sim3000\times$ vs.\ na\"ive MSM, for $n\le 2^{18}$).}
    \label{fig:2g2t-speedup}
\end{figure}

\paragraph{Comparison to Pinkas.}
For context, Pinkas~\cite{Pinkas2025VerifiableMSM} proposes a \emph{publicly verifiable} statistical check for MSMs in
prime-order groups. In his analysis, when scalars have bit-length $m$ (typically $m\approx \log_2 q$), verification reduces to
computing an $n$-wide MSM with \emph{shorter} $(\lambda+\log_2 m)$-bit exponents, yielding an expected speedup of roughly $\frac{m}{\lambda+\log_2 m}$.
Over Ristretto255, one may take $m\approx \log_2 q\approx 252$ and $\log_2 m\approx 8$, which predicts a speedup of about
$252/(64+8)\approx 3.5\times$ at $\lambda=64$.
In contrast, matching \textsf{2G2T}'s per-query soundness error $1/q$ corresponds to $\lambda\approx \log_2 q$ (about $252$),
in which case the predicted speedup essentially vanishes.
Pinkas' protocol also requires the server to return $m$ intermediate group elements (about $252$--$256$ for Ristretto255),
whereas \textsf{2G2T} returns only two group elements $(A,B)$.
These protocols therefore occupy different points in the design space: Pinkas provides public verifiability with
$\Theta(\log q)$ communication, while \textsf{2G2T} achieves constant-size responses and very fast designated-verifier
checks after a one-time keyed setup.

\section{Conclusion}
\label{sec:conclusion}

We introduced \textsf{2G2T}, a lightweight protocol for verifiably outsourcing generic-base multi-scalar multiplication
$\MSM(\vec{P},\vec{x})$ in transparent prime-order groups.
Our analysis shows that \textsf{2G2T} is perfectly complete and achieves \emph{full-group} statistical soundness:
even against an unbounded adversarial server, the probability of accepting an incorrect $A$ is at most $1/q$ per execution,
and at most $e/q$ over $e$ adaptive executions, requiring only that $\mathbb{G}$ has prime order $q$.
Empirically, a Rust implementation over Ristretto255 confirms that verification is extremely fast in the large-$n$ regime:
for $n$ up to $2^{18}$, verification is up to $\sim 300\times$ faster than locally computing an optimized MSM, while the
server-to-client response remains constant-size (two compressed group elements, 64 bytes on Ristretto255).
Overall, \textsf{2G2T} isolates a simple and practical primitive for integrity-preserving MSM acceleration with minimal
communication and verifier overhead.

\paragraph{Limitations and future work.}
\textsf{2G2T} is a designated-verifier protocol with a one-time keyed setup for fixed bases, and we do not aim to hide the
coefficients $\vec{x}$ from the server.
An important direction is extending \textsf{2G2T} to prover-side outsourcing, where $\vec{x}$ may be witness-dependent and must
remain hidden, while preserving verification that is far cheaper than a full MSM.
A second direction is exploring publicly verifiable variants, likely trading off larger proofs and/or stronger assumptions
for removing the verifier's secret state.

\clearpage

\bibliographystyle{splncs04}
\bibliography{citations}

\end{document}